\newtheorem{prop}{Proposition}
\newenvironment{keywords}{%
  \vspace{0.5em}\noindent\textbf{Keywords:}\ }%
  {\par\vspace{0.75em}}
\newcommand{\ket}[1]{\lvert #1\rangle}
\newcommand{\bra}[1]{\langle #1\rvert}
\newcommand{\braket}[2]{\langle #1 \mid #2\rangle}
\newcommand{\qop}{\mathbin{\oplus}}
\newcommand{\C}{\mathbb{C}}
\title{\textbf{Comment on arXiv:2601.04248v1:\\
``Superposition of states in quantum theory'' (J.-M.\ Vigoureux)}}
\author[1]{Miko{\l}aj Sienicki}
\author[2]{Krzysztof Sienicki\thanks{E-mail: \texttt{niskrissienicki@gmail.com}}}
\affil[1]{Polish--Japanese Academy of Information Technology, ul.~Koszykowa~86, 02--008 Warsaw, Poland, European Union}
\affil[2]{Chair of Theoretical Physics of Naturally Intelligent Systems (NIS), Lipowa~2/Topolowa~19, 05--807 Podkowa Le\'sna, Poland, European Union}
\date{\today}
\begin{document}
\maketitle

\textbf{Abstract.}
Vigoureux suggests replacing the usual linear superposition rule of quantum mechanics with a
M\"obius-type ``composition law'' $\qop$, motivated by (i) bounded-domain composition laws in special
relativity, (ii) familiar transfer-matrix formulas in multilayer optics, and (iii) an analogy with
the inclusion--exclusion rule for classical probabilities.
In this note we explain why the proposal does not work as a modification of quantum theory.
For two components, the new rule differs from the ordinary sum only by an overall scalar factor,
so after normalization it represents the same ray and cannot change any physical prediction.
For three or more components, if one extends the two-term prescription in the natural recursive
way, the result becomes bracket/order dependent and can even change the ray, so a ``state'' is no
longer uniquely determined by a given preparation.
We also clarify why the inclusion--exclusion argument and the optics analogy do not support a
foundational change to Hilbert-space linearity.

\begin{keywords}
quantum superposition; Hilbert-space linearity; M\"obius transformation; deformed addition;
non-associativity; ray equivalence; Born rule; interference; Fabry--Perot resummation;
Lorentz group; Wigner (Thomas) rotation
\end{keywords}

\section{Introduction: ``deformations'' as representations vs.\ new physics}

A common pattern in ``generalized arithmetic'' programs is this.
One starts from a perfectly legitimate bounded-domain composition law---for example Einstein
velocity addition on $(-c,c)$, disk automorphisms, or the fractional-linear formulas that appear in
optics via transfer matrices---and then promotes that nonlinear operation to a \emph{replacement}
for the linear structure that the original theory relies on.
The new framework can look coherent on paper, but it often obtains its headline claims by quietly
changing premises or by changing what is meant by the experimentally aggregated quantities.

This methodological point is familiar from debates about Czachor-type deformed calculi in Bell/CHSH
discussions: if one replaces ordinary additivity of expectations by a deformed operation (linear
only with respect to a chosen $\oplus_f$), then the standard CHSH algebra no longer goes through.
But that is a \emph{change of framework}, not a refutation of Bell.
A useful diagnostic is an ``admissibility checklist'' for any proposed deformation:
(i) regularity (no singularities or ill-defined limits),
(ii) closure (the operation is defined on the stated domain),
(iii) symmetry (the deformation matches the genuine physical symmetry), and
(iv) an operational bridge (a clear device-level rule linking laboratory aggregation to the
deformed operation, rather than to ordinary $+$).

We apply the same lens here to Vigoureux's proposal in arXiv:2601.04248v1 to replace quantum
superposition by a M\"obius-type composition law $\qop$ \cite{Vigoureux2026}.

We use the standard physics convention for the Hilbert-space inner product:
$\braket{v}{w}$ is conjugate-linear in the first slot and linear in the second.
Pure states are identified with rays (nonzero vectors modulo nonzero complex scalars).

\section{What Vigoureux proposes}

The paper argues that ``bounded quantities'' should not be combined by ordinary addition $+$ but by
a M\"obius-type composition law $\qop$ (unit-disk automorphisms), and that this law should be used
in quantum theory \cite{Vigoureux2026}.
Concretely, for two states it proposes replacing
\begin{equation}
\ket{\psi} = c_1\ket{\phi_1} + c_2\ket{\phi_2}
\end{equation}
by
\begin{equation}\label{eq:vig-oplus}
\ket{\psi}
= c_1\ket{\phi_1}\qop c_2\ket{\phi_2}
:= \frac{c_1\ket{\phi_1}+c_2\ket{\phi_2}}{1+c_1^*c_2\braket{\phi_1}{\phi_2}}.
\end{equation}
The author suggests that the denominator reflects ``non-exclusivity'' before measurement and that
using $+$ ``overestimates'' probabilities \cite{Vigoureux2026}.
The manuscript also remarks that, since the denominator is a scalar, the resulting object ``also is
a solution of the Schr\"odinger equation'' \cite{Vigoureux2026}.

\section{Fatal point \#1: for two components the proposal is ray-trivial (or contradicts its own claims)}

\subsection{Ray equivalence: $\qop$ differs from $+$ only by an overall scalar}

Let
\begin{equation}
\ket{\psi_{+}} := c_1\ket{\phi_1}+c_2\ket{\phi_2},
\qquad
s := 1+c_1^*c_2\braket{\phi_1}{\phi_2}.
\end{equation}
Then \eqref{eq:vig-oplus} is simply
\begin{equation}\label{eq:scalar-rescale}
\ket{\psi_{\qop}} = \frac{1}{s}\,\ket{\psi_{+}}.
\end{equation}
Whenever $s\neq 0$, $\ket{\psi_{\qop}}$ is proportional to $\ket{\psi_{+}}$.
But in quantum mechanics physical pure states are rays:
\(
\ket{\psi}\sim \lambda\ket{\psi}
\)
for any nonzero $\lambda\in\C$.
So, for two components, the $\qop$ prescription cannot change any prediction \emph{as long as one
normalizes in the usual way}.
In particular, overall rescalings cancel from normalized probabilities:
\begin{equation}\label{eq:born-invariant}
\frac{\lvert \bra{x}\psi_{\qop}\rangle\rvert^2}{\braket{\psi_{\qop}}{\psi_{\qop}}}
=
\frac{\lvert \bra{x}\psi_{+}\rangle\rvert^2}{\braket{\psi_{+}}{\psi_{+}}}.
\end{equation}
This is hard to reconcile with the manuscript's repeated motivation that the denominator shows
probabilities computed with $+$ are ``too large'' \cite{Vigoureux2026}.

\subsection{The denominator is not a normalization factor}

If the intent is to ``build in'' normalization, the denominator does not do that.
Normalization depends on $\sqrt{\braket{\psi_{+}}{\psi_{+}}}$, not on $s$.
Indeed,
\begin{equation}\label{eq:norm}
\|\psi_{\qop}\|^2
=
\frac{\|\psi_{+}\|^2}{|s|^2}
=
\frac{|c_1|^2+|c_2|^2+2\Re\!\big(c_1^*c_2\braket{\phi_1}{\phi_2}\big)}
{\big|1+c_1^*c_2\braket{\phi_1}{\phi_2}\big|^2},
\end{equation}
which is not identically $1$.

\subsection{Ill-definedness: poles and cancellation limits}

The binary rule is undefined when the denominator vanishes.
In the two-state form \eqref{eq:vig-oplus} this occurs if
\begin{equation}\label{eq:pole-two}
1+c_1^*c_2\braket{\phi_1}{\phi_2}=0,
\end{equation}
and in the vector form \eqref{eq:vector-oplus} if
\begin{equation}\label{eq:pole-vector}
1+\braket{v}{w}=0,
\end{equation}
even when $v+w\neq 0$.

A further issue appears in a basic cancellation regime.
If $\ket{\phi_1}=\ket{\phi_2}=: \ket{\phi}$ and $c_2=-c_1$, the usual sum gives the zero vector
$\ket{\psi_{+}}=\mathbf{0}$, while the $\qop$ expression gives
\begin{equation}\label{eq:00}
\ket{\psi_{\qop}}
=
\frac{(c_1-c_1)\ket{\phi}}{1-|c_1|^2},
\end{equation}
so for $|c_1|=1$ it becomes an indeterminate $0/0$.
The point is not that $\mathbf{0}$ should represent a physical state; it is that the proposed
formula introduces poles/indeterminacies precisely where linear superposition is continuous and
well behaved, so extra limiting prescriptions would be needed.

\section{Fatal point \#2: for three or more components the rule is bracket/order dependent}

The manuscript primarily states the two-term prescription \eqref{eq:vig-oplus}.
To discuss $N\ge 3$ components one must extend it.
A natural extension---and the one suggested by the paper's emphasis on ``weak associativity'' and
bracket ordering---is to define, for general vectors,
\begin{equation}\label{eq:vector-oplus}
v \qop w := \frac{v+w}{1+\braket{v}{w}},
\end{equation}
and then to build multi-term ``superpositions'' recursively with explicit parentheses.
If one instead stipulates a different multi-ary rule that restores associativity or removes bracket
dependence, then one is adding \emph{extra structure not specified in the manuscript}, i.e.\ one is
defining a different theory.

With the natural recursive extension, the problem becomes immediate:
\emph{a physical preparation does not come with an intrinsic parenthesization convention}.
Moreover, the ambiguity is not merely a global phase.

\begin{prop}[Bracket dependence can change the ray]\label{prop:nonassoc}
There exist $u,v,w$ in a finite-dimensional Hilbert space such that
$(v\qop w)\qop u$ and $v\qop(w\qop u)$ are not proportional, hence represent different rays.
\end{prop}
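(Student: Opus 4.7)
The plan is to evaluate both iterated expressions directly from the binary rule \eqref{eq:vector-oplus}, exhibit each as a nonzero scalar multiple of an explicit linear combination of $u,v,w$, and then display a concrete linearly independent triple in $\C^3$ for which those two combinations are not collinear. Because the statement is about rays, all overall scalar prefactors from the recursion can be discarded; what survives is a comparison of ``numerator vectors'' with visibly different coefficient patterns.

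Concretely, writing $s_1:=1+\braket{v}{w}$ and $s_2:=1+\braket{w}{u}$, I would first note that $v\qop w=(v+w)/s_1$, substitute this into the outer $\qop u$, and clear denominators, using that the inner product is conjugate-linear in the first slot and linear in the second. Tracking only the vector direction (all scalar prefactors are absorbed into the $\sim$ equivalence), this gives
\[
(v\qop w)\qop u \;\sim\; v+w+\bigl(1+\braket{v}{w}\bigr)u,
\]
and symmetrically
\[
v\qop(w\qop u) \;\sim\; \bigl(1+\braket{w}{u}\bigr)v+w+u.
\]
If $\{v,w,u\}$ is linearly independent, matching coefficients of $v,w,u$ in these two combinations forces both $\braket{v}{w}=0$ and $\braket{w}{u}=0$. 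Hence any linearly independent triple with, say, $\braket{v}{w}\neq 0$ produces non-collinear products, so the two bracketings represent different rays.

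For the explicit witness I would choose an orthonormal basis $\{e_1,e_2,e_3\}$ of $\C^3$ and set $v=e_1+e_2$, $w=e_2$, $u=e_3$, so that $\braket{v}{w}=1$ and $\braket{w}{u}=\braket{v}{u}=0$. The two numerators then simplify to $e_1+2e_2+2e_3$ and $e_1+2e_2+e_3$, which are manifestly not scalar multiples of one another.

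The main (and essentially only) obstacle is ensuring that no denominator vanishes at any intermediate step, since a pole would make one of the iterated expressions ill-defined rather than merely inequivalent to the other. I would therefore verify directly that the four relevant scalars---$s_1$, $s_2$, and the two outer denominators produced in the recursion---are all nonzero on the chosen triple; with the numbers above they all evaluate to $1$ or $2$, so both $(v\qop w)\qop u$ and $v\qop(w\qop u)$ are well-defined and yield distinct rays, as required.
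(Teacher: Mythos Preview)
Your argument is correct. The symbolic reduction
\[
(v\qop w)\qop u \sim v+w+(1+\braket{v}{w})\,u,
\qquad
v\qop(w\qop u)\sim (1+\braket{w}{u})\,v+w+u
\]
is accurate under the stated inner-product convention, the coefficient comparison under linear independence is valid, and your $\C^3$ witness with all denominators equal to $1$ or $2$ is clean and well-defined.

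Your route differs from the paper's in a useful way. The paper works in $\C^2$ with the numerical triple $v=0.8\ket{0}$, $w=0.8\ket{+}$, $u=0.8\ket{1}$ and simply evaluates both bracketings to eight-digit decimal vectors, then observes that the component ratios disagree. That is a bare numerical verification: it certifies the existence claim but gives no structural information. Your approach instead isolates the ``numerator vectors'' symbolically and shows that, for any linearly independent triple, proportionality of the two bracketings forces $\braket{v}{w}=\braket{w}{u}=0$; this both explains \emph{why} non-associativity at the ray level is generic and yields an exact integer witness without floating-point arithmetic. The price is that you move to $\C^3$ to exploit linear independence, whereas the paper's example lives in $\C^2$ (where $u,v,w$ are necessarily dependent and your coefficient-matching shortcut does not apply directly); but since the proposition only asks for existence in \emph{some} finite-dimensional Hilbert space, your $\C^3$ example is entirely adequate and arguably more transparent.
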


\begin{proof}
Work in $\C^2$ with the computational basis
\(
\ket{0}=(1,0)^{\mathsf T},
\ket{1}=(0,1)^{\mathsf T}
\)
and
\(
\ket{+}=(\ket{0}+\ket{1})/\sqrt2.
\)
Let
\begin{equation}\label{eq:uvw}
v=0.8\ket{0},\qquad
w=0.8\ket{+},\qquad
u=0.8\ket{1},
\end{equation}
and define $\qop$ by \eqref{eq:vector-oplus}.
A direct calculation gives
\begin{equation}\label{eq:vw}
v\qop w = (0.94019964,\,0.38944344)^{\mathsf T},
\end{equation}
\begin{equation}\label{eq:leftbracket}
(v\qop w)\qop u = (0.71685886,\,0.90689576)^{\mathsf T},
\end{equation}
\begin{equation}\label{eq:rightbracket}
v\qop (w\qop u) = (0.90689576,\,0.71685886)^{\mathsf T}.
\end{equation}
These two vectors are not proportional (their component ratios differ), so they represent different
rays.
\end{proof}

For the binary operation \eqref{eq:vector-oplus}, for all pairs $(v,w)$ in its domain
(i.e.\ assuming $1+\langle v\mid w\rangle \neq 0$ and $1+\langle w\mid v\rangle\neq 0$),
one has
\begin{equation}\label{eq:comm-up-to-phase}
w\qop v
=
\frac{1+\braket{v}{w}}{1+\braket{v}{w}^*}\,(v\qop w),
\end{equation}
and the prefactor has unit modulus. Hence $v\qop w$ and $w\qop v$ differ only by a phase and
represent the same ray at the two-term level.
The genuine operational ambiguity arises for $N\ge 3$, where different bracketings can yield
non-proportional vectors (Proposition~\ref{prop:nonassoc}), hence different rays.

\section{The ``bounded amplitudes'' motivation is a category error}

The manuscript motivates $\qop$ by saying that ordinary $+$ can lead outside the unit disk, so $+$
is unsuitable for ``probability amplitudes'' \cite{Vigoureux2026}.
A careful formulation matters here:

\begin{itemize}
\item If a normalized state is expanded in an orthonormal basis,
$\ket{\psi}=\sum_n c_n\ket{n}$, then $\sum_n |c_n|^2=1$ and each $|c_n|\le 1$.
But this is a statement about a particular representation, not a basis-invariant principle that
would single out a disk-preserving ``addition'' law.

\item More invariantly, for any normalized $\ket{\psi}$ and any normalized $\ket{\phi}$,
Cauchy--Schwarz gives $|\braket{\phi}{\psi}|\le 1$.
This is a bound on transition amplitudes between normalized states; it does not imply that
arbitrary intermediate coefficients in nonorthogonal decompositions, path expansions, or unnormalized
partial contributions must remain inside the unit disk under a universal binary combination.

\item Quantum mechanics constrains the ray (state up to scalar) and, after normalization, the Born
probabilities. It does not require that coefficients in arbitrary decompositions remain in the unit
disk under some universal ``addition'' law.
\end{itemize}

Accordingly, boundedness of certain coefficient sets does not motivate replacing Hilbert-space
linearity; at most it motivates bounded parameterizations in specific subproblems.

\section{The inclusion--exclusion probability analogy does not carry to quantum events}

Vigoureux interprets the denominator via the classical identity
\begin{equation}\label{eq:incl-excl}
P(A\cup B)=P(A)+P(B)-P(A\cap B),
\end{equation}
rewriting it into a form reminiscent of $\qop$ and suggesting that the denominator encodes a
``joint probability'' term \cite{Vigoureux2026}.

Two problems are immediate:

\begin{itemize}
\item Quantum ``events'' are typically projectors, and noncommuting projectors do not form a single
Boolean algebra. In that setting, naive inclusion--exclusion identities are not derivations of
quantum amplitude-combination rules.

\item In \eqref{eq:vig-oplus} the denominator involves the complex quantity
$c_1^*c_2\braket{\phi_1}{\phi_2}$, which is not a real number in $[0,1]$ and so cannot be identified
with a classical joint probability.
\end{itemize}

\section{Why the optics analogy cuts the other way}

The manuscript emphasizes plane-parallel plates/Fabry--Perot interference, where reflection
amplitudes can be written in M\"obius form, and treats this as evidence for using $\qop$ in quantum
superposition \cite{Vigoureux2026}.
But the standard Fabry--Perot result is obtained by summing an infinite geometric series of
multiple reflections (repeated linear additions of path amplitudes), which then simplifies to a
fractional expression.
The moral is therefore the opposite:
\begin{quote}
M\"obius-type formulas can emerge \emph{within} linear wave theory as closed-form resummations of
many-path contributions; they do not indicate that superposition itself should be replaced.
\end{quote}

\section{The Schr\"odinger-equation remark is true only in a way that makes $\qop$ vacuous}

The manuscript states that because the denominator in \eqref{eq:vig-oplus} is a scalar, the $\qop$
superposition ``also is a solution of the Schr\"odinger equation'' \cite{Vigoureux2026}.
For the time-independent Schr\"odinger equation this is trivial: multiplying an eigenfunction by a
constant scalar does not change the eigenvalue equation.
For the time-dependent Schr\"odinger equation, the statement is correct only in a limited sense:
if both $\ket{\phi_1(t)}$ and $\ket{\phi_2(t)}$ evolve under the same unitary $U(t)$, then
$\braket{\phi_1(t)}{\phi_2(t)}$ is constant in time, so $s$ is constant and $\ket{\psi_{\qop}(t)}$
differs from $\ket{\psi_{+}(t)}$ only by a constant scalar and hence represents the same ray.

If one instead treats the scalar factor as time-dependent in a way not induced by unitary evolution
of the components, then multiplying by that factor will not, in general, preserve the
time-dependent Schr\"odinger equation. The manuscript does not provide a consistent dynamical
framework in which $\qop$ is both nontrivial and compatible with standard quantum dynamics.

The paper claims that using $\qop$ yields faster convergence, reduced ``noise,'' and improved
numerical stability through progressive ``renormalization,'' referring to earlier work
\cite{Vigoureux2026}. In the present manuscript, these claims are not supported by benchmarks,
controlled numerical comparisons, or an error analysis, and remain speculative as stated.

\section{Conclusion}

As a proposal to modify the foundational superposition rule of quantum mechanics, the manuscript
fails on two basic requirements:

\begin{itemize}
\item For two components, the $\qop$ rule produces a vector proportional to the usual sum and so is
physically ray-equivalent after normalization; yet the manuscript motivates it as a correction to
probabilities obtained with $+$.

\item For three or more components, once one adopts the natural recursive extension induced by the
two-term rule, the result is bracket/order dependent and can change the ray, so the ``state'' is
not uniquely determined by the preparation.
\end{itemize}

The probability and optics analogies do not repair these issues; in particular, M\"obius-type
formulas in optics arise from resummations within linear theory rather than from a replacement of
linear superposition.

\section*{Note added (related work on M\"obius-type ``additions'' and the 3D case)}

After completing the present manuscript, we became aware of earlier contributions by
J.-M.~Vigoureux that are directly relevant to the broader ``deformed addition'' narrative
surrounding Czachor-style proposals.

First, in the special-relativistic setting, Vigoureux reformulates the composition of non-parallel
velocities in the unit disk via a M\"obius (disk-automorphism) law (for dimensionless complex
velocities, i.e.\ in units $c=1$),
\(W=(V_1+V_2)/(1+\overline{V}_1V_2)\),
with the polar parametrization \(V=\tanh(a/2)e^{i\alpha}\).
He emphasizes that for multiple compositions one must iterate with an explicit bracketing order,
reflecting the nontrivial group structure behind non-collinear boosts.
He also stresses that the complex-plane formula is essentially tied to the coplanar case.
For genuinely three-dimensional, non-coplanar compositions he gives a Pauli-algebra / \(2\times 2\)
matrix formulation, which is intrinsically non-commutative and does not reduce to an ordinary
commutative/associative ``vector addition'' law.
More precisely, composing non-collinear boosts yields, in general, a boost together with a Wigner
(Thomas) rotation, and this is the obstruction to any reduction to a simple
commutative/associative ``addition'' on $\mathbb{R}^3$.%
\cite{VigoureuxEJP2013}

Second, in a recent preprint, Vigoureux advocates promoting essentially the same disk
automorphism/M\"obius-type composition law \(\oplus\) from relativistic velocity composition and
multilayer optics to a replacement for the linear superposition rule in quantum theory, again
explicitly warning about ``weak associativity'' and the need to track bracketing order when
composing many terms.
While we disagree with that quantum-theoretic extrapolation, the preprint provides independent
context showing that the M\"obius-composition viewpoint is not unique to Czachor and that the
3D/non-coplanar Lorentz case forces one beyond any ordinary commutative/associative ``vector
addition'' law into intrinsically non-abelian structure.
See Ref.~\cite{Vigoureux2026} and the references therein.

We were not aware of these Vigoureux references when preparing our discussion of Czachor's
velocity-addition arithmetic and its limitations; we add them here to clarify priority and to
situate our critique within the broader literature on M\"obius-type composition laws.

\end{document}